\newcommand{\cA}{{\cal A}}
\newcommand{\cC}{{\cal C}}
\newcommand{\cG}{{\cal G}}
\newcommand{\cH}{{\cal H}}
\newcommand{\cP}{{\cal P}}
\newcommand{\sP}{\cP}
\newcommand{\sG}{\cG}
\newcommand{\Gr}{\smash{{\sG\kern-1.5pt}_q\kern-0.5pt(n,k)}}
\newcommand{\Grtwo}{\smash{{\sG\kern-1.5pt}_2\kern-0.5pt(n,k)}}
\newcommand{\Gkone}{\smash{{\sG\kern-1.5pt}_q\kern-0.5pt(n,k_1)}}
\newcommand{\Gktwo}{\smash{{\sG\kern-1.5pt}_q\kern-0.5pt(n,k_2)}}
\newcommand{\Ps}{\smash{{\sP\kern-2.0pt}_q\kern-0.5pt(n)}}
\newcommand{\bI}{{\bf I}}
\newcommand{\bi}{{\bf i}}
\newcommand{\bv}{{\bf v}}
\newtheorem{theorem}{Theorem}
\newtheorem{remark}{Remark}
\begin{document}

\title{High Dimensional Error-Correcting Codes}

\author{\authorblockN{Eitan Yaakobi}
\authorblockA{Dept. of Electrical and Computer Engineering\\
University of California, San Diego\\
La Jolla, CA 92093, USA \\
Email: eyaakobi@ucsd.edu} \and
\authorblockN{Tuvi Etzion}
\authorblockA{Dept. of Computer Science\\
Technion-Israel Institute of Technology\\
Haifa 32000, Israel \\
Email: etzion@cs.technion.ac.il}}

\maketitle
\begin{abstract}
In this paper we construct multidimensional codes with high
dimension. The codes can correct high dimensional errors which
have the form of either small clusters, or confined to an area
with a small radius. We also consider small number of errors in a
small area. The clusters which are discussed are mainly spheres
such as semi-crosses and crosses. Also considered are clusters
with small number of errors such as 2-bursts, two errors in
various clusters, and three errors on a line. Our main focus is on
the redundancy of the codes when the most dominant parameter is
the dimension of the code.
\end{abstract}

\section{Introduction}
\label{sec:introduction}

Multidimensional coding in general and two-dimensional coding in
particular is a subject which attracts a lot of attention in the
last three decades. But, although the related theory of the
one-dimensional case is well developed, the theory for the
multidimensional case is developed rather slowly. This is due to
the fact that most of the one-dimensional techniques are not
generalized easily to higher dimensions and usually many different
techniques are used in the multidimensional case.

\begin{remark}
In our discussion we will consider noncyclic arrays, even if the
construction works on a cyclic array, i.e., a torus. This is done
for convenient reasons. In the following redundancy definition,
the array is considered to be cyclic. But, since the size of the
array is very large we will omit the minor difference in the
redundancy between a cyclic array and a noncyclic array.
\end{remark}

A binary multidimensional error-correcting code corrects errors
which occur in a multidimensional array. Throughout the paper the
volume of the array is $N$. If we are given a set with $\beta$
possible patterns of errors (no error is also such a pattern) that
can occur anywhere in the array then the redundancy of the code
must satisfy $r \geq \text{log}~ (N \cdot \beta ) = \text{log}~ N+
\text{log}~ \beta$ (all logarithms in this paper are in base 2).
The difference $r- \text{log}~ N$ is called the {\it excess
redundancy} of the code~\cite{Abd86,AMT}.

Abdel-Ghaffar~\cite{Abd86} constructed binary two-dimensional
codes which correct a cluster of a rectangle shape with area $B$
for which $r = \lceil \text{log}~N \rceil +B$. These codes attain
the lower bound on the excess redundancy. There is no known
generalization for the construction of Abdel-Ghaffar~\cite{Abd86}
to more than two dimensions. Moreover, the number of length
parameters on which the construction works is very limited. A
construction in~\cite{EtYa} produces a $D$-dimensional code for
correction of a $D$-dimensional box-error with redundancy $\lceil
\text{log}~ N \rceil +B+\lceil \text{log}~ b_1 \rceil$, where
$b_1$ is the length of the $D$-dimensional box in the first
dimension. For the two-dimensional case, this construction is more
flexible in its parameters than the construction in~\cite{Abd86}.

In this paper we are interested in $D$-dimensional codes, where $D
>2$ is usually very large. On the other hand, we are
interested either in a small number of errors or that the cluster
is spread in radius at most one from the center of the error
event.

How can we correct such bursts? If the size of the burst is one
then we can always use an one-dimensional Hamming code folded on
the $D$-dimensional array to obtain an optimal code. Given a set
$S$ with patterns of errors, the most natural and simple way to
correct an error from $S$ is to correct a box-error which contains
all possible errors from $S$. This might result in a large excess
redundancy as the number of patterns in $S$ might be much smaller
than the number of patterns defined by a box-error. The goal of
this paper is to construct codes with excess redundancy much
smaller than the one implied by a correction of a related
box-error. If the size of the burst is two then we will see in the
sequel that a code with optimal, or almost optimal, excess
redundancy can be constructed. But, if the size of the burst is
three then we don't know how to construct a code which attains the
lower bound on the excess redundancy. In fact, the question how to
construct an optimal code which corrects an arbitrary cluster is
still open.

The rest of the paper is organized as follows. In
Section~\ref{sec:basic} we present the definitions of linear codes
and shapes which are discussed throughout this paper. In
Section~\ref{sec:optimal} we discuss codes for which the error is
small and confined to an one-dimensional line. We will examine two
types of errors, 2-bursts and 3-bursts (bursts of length two and
three, respectively) on a line, where a $b$-{\it burst} is any set
of errors that is confined to an area of size $b$. In
Section~\ref{sec:coloring} we discuss a coloring method presented
in~\cite{EtYa} and explain how it is designed to correct cluster
errors. We show how to correct error whose shape is a semi-cross
(corner) with arms of length one (radius of length one) or a cross
(Stein's sphere) with arms of length one (radius of length one).
These two shapes will exhibit an excellent example for the
strength of the coloring method. In Section~\ref{sec:limit_weight}
we consider clusters with small weight inside a relatively larger
cluster. We present asymptotically optimal solutions for the case
where the weight is two and the cluster is a semi-cross, cross, or
a two-dimensional square. In Section~\ref{sec:conclude} we
conclude and present the goals for the future research.
\section{Basic Definitions}
\label{sec:basic}

A binary multidimensional $b$-error-correcting code is a set $C$
consisting of $D$-dimensional binary arrays of the same size, such
that if we are given an array $\cA$ from $C$ and the values of up
to $b$ positions in $\cA$ are changed, then we will be able to
recover $\cA$. We consider only linear codes as done in all
previous works.

A binary $D$-dimensional linear code $C$ is a linear subspace of
the $n_1\times n_2 \times \cdots \times n_D$ binary matrices. If
the subspace is of dimension $N-r$, where $N=\prod_{\ell=1}^D
n_\ell$, we say that the code is an $[n_1\times n_2 \times \cdots
\times n_D , N -r]$ code. The code can be also defined by its
parity-check matrix. Let $H=(h_{\bi , j})$, where $\bi \in \bI$,
$\bI =\{ (i_1,i_2,\ldots,i_D) ~:~ 0 \leq i_\ell \leq n_\ell-1 \}$,
and $0 \leq j \leq r-1$, be a $(D+1)$-dimensional binary matrix of
size $n_1\times n_2 \times \cdots \times n_D \times r$, consisting
of $r$ linearly independent $n_1\times n_2 \times \cdots \times
n_D$ matrices. Let $c=(c_{\bi})$ denote a binary $n_1\times n_2
\times \cdots \times n_D$ matrix. The linear subspace defined by
the following set of $r$ equations,
$$\sum_{\bi \in \bI} c_{\bi}h_{\bi , j}=0,$$
for all $0\leq j\leq r-1$, is an $[n_1\times n_2 \times \cdots
\times n_D , N -r]$ code. We say that $r$ is the redundancy of the
code.

Our goal in this paper is to handle $D$-dimensional errors from
one of the following types:

\begin{itemize}
\item Errors which don't spread more than one position around an
artificial center (which is the center of the error event).

\item Two errors in a cluster of some shape.
\end{itemize}
These clusters include the following types of bursts.

\begin{enumerate}
\item A $D$-dimensional 2-burst which corresponds to any two
adjacent positions that might be in error.

\item A $D$-dimensional 3-burst in which all the errors are on the
same line. Such an error corresponds to three positions of the
form $(i_1, \ldots, i_{j-1},i_j-1,i_{j+1},\ldots,i_D)$, $(i_1,
\ldots, i_{j-1},i_j,i_{j+1},\ldots,i_D)$, and $(i_1, \ldots,
i_{j-1},i_j+1,i_{j+1},\ldots,i_D)$ for some $j$, $1 \leq j \leq
D$.

\item A $D$-dimensional burst whose shape is a semi-cross with
arms of length one. Such a semi-cross has a center point at
$(i_1,i_2,\ldots,i_D)$ and includes all the points of the form
$(i_1, \ldots, i_{j-1},i_j+1,i_{j+1},\ldots,i_D)$, $1 \leq j \leq
D$.

\item A $D$-dimensional burst whose shape is a cross with arms of
length one. Such a cross has a center point at
$(i_1,i_2,\ldots,i_D)$ and includes all the points of the form
$(i_1, \ldots, i_{j-1},i_j-1,i_{j+1},\ldots,i_D)$ and all the
points of the form $(i_1, \ldots,
i_{j-1},i_j+1,i_{j+1},\ldots,i_D)$, $1 \leq j \leq D$.


\item Two errors inside a semi-cross or a cross with arms of
length $R$. A semi-cross with arms of length $R$ has a center
point at $(i_1,i_2,\ldots,i_D)$ and includes all the points of the
form $(i_1, \ldots, i_{j-1},i_j+\ell,i_{j+1},\ldots,i_D)$, $1 \leq
j \leq D$, $1 \leq \ell \leq R$. Similarly, a cross with arms of
length $R$ is defined. These errors are also related to two errors
inside a two-dimensional square with edges of length $R$.
\end{enumerate}

Why are we interested in crosses and semi-crosses? Errors are
likely to be spread within spheres to some limited radius. Crosses
and semi-crosses are types of spheres as described in~\cite{Gol69}
which are relatively simpler to handle than other spheres. These
spheres are also discussed extensively in the literature,
e.g.~\cite{GoWe,Ste,HaSt,StSz94}.
\section{Constructions with Low Redundancy}
\label{sec:optimal}

In this section we will handle two types of errors, 2-burst and
3-burst on a straight line. The number of possible patterns of
errors (excluding no errors) which can be confined to a
$D$-dimensional 2-burst is $D+1$ and to a $D$-dimensional 3-burst
on a line is $3D+1$. Hence, a lower bound of their redundancies is
$\text{log}~ ((D + 2) \cdot N)$ and $\text{log}~ ((3 D + 2) \cdot
N)$, respectively.
\subsection{Correction of 2-burst}

Assume that we have a $D$-dimensional array of size $n_1 \times
n_2 \times \cdots \times n_D$ on which we want to correct any
cluster of error that can be confined to a 2-burst.

\noindent {\bf Construction A:} Let $\alpha$ be a primitive
element in GF($2^m$) for $2^m-1 \geq \prod_{\ell=1}^D n_\ell$. Let
$d=\lceil \log D \rceil$ and $\bi=(i_1,i_2,\ldots,i_D)$, where $0
\leq i_\ell \leq n_\ell-1$. Let $A$ be a $d\times D$ matrix
containing distinct binary $d$-tuples as columns. We construct the
following $n_1 \times n_2 \times \cdots \times n_D \times (m+d+1)$
parity check matrix $H$:

$$h_{\bi}=
\left[\begin{array}{c}
1 \\
A\bi^T \mod 2 \\
\alpha^{\sum_{j=1}^D i_j (\prod_{\ell=j+1}^D n_\ell)}
\end{array}\right],
$$
for all $\bi=(i_1,i_2,\ldots,i_D)$, where $0 \leq i_\ell \leq
n_\ell-1$.
\begin{remark}
The matrix $A$ that we are using was also used in~\cite{ScEt}, but
the construction here is more flexible in its parameters. This is
a consequence from the way that we fold the elements of GF($2^m$)
into the parity-check matrix of the code. Moreover, we will see in
the sequel that this method will be of use for constructions of
larger bursts with at most two erroneous positions.
\end{remark}

In the decoding algorithm we assume that the error occurred can be
confined to a 2-burst. The syndrome received $v$ in the decoding
algorithm consists of three parts.
\begin{itemize}
\item The first bit determines the number of errors occurred.
Obviously if the syndrome is the all-zeroes vector than no errors
occurred. If the first bit of the syndrome is an {\it one} then
exactly one error occurred and its position is the position of $v$
in $H$. If the first bit of  a non-zero vector $v$ is a {\it zero}
then two errors occurred. Their position is determined by the
other $m+d$ entries of $v$.

\item The next $d$ bits determine the dimension in which the burst
occurred. There are $D$ dimensions and each column of the matrix
$A$ corresponds to a different dimension for two consecutive
errors. If the errors occurred in positions $\bi_1 =
(i_1,\ldots,i_D)$ and
$\bi_2=(i_1,\ldots,i_{j-1},i_j+1,i_{j+1},\ldots,i_D)$ then the
value of the $d$ bits, $(A\bi_1^T + A\bi_2^T)\bmod 2$, is the
$j$-th column of the matrix $A$.

\item The entries of the last $m$ rows of the matrix $H$ form the
folding of the first $\Pi_{\ell=1}^D n_{\ell}$ consecutive
elements of GF($2^m$). Given a dimension $\ell$ there exists an
integer $i(\ell)$ such that each two consecutive elements in
dimension $\ell$ have the form $\alpha^j , \alpha^{j+i(\ell)}$. It
is easy to verify that for $j_1 \neq j_2$ we have $\alpha^{j_1} +
\alpha^{{j_1}+i(\ell)} \neq \alpha^{j_2} +
\alpha^{{j_2}+i(\ell)}$. Thus, given the dimension of the burst of
size two, the last $m$ bits of $v$ can determine the two
consecutive positions of the burst.
\end{itemize}
\begin{theorem}
$~$
\begin{itemize}
\item The code constructed in Construction A can correct any error
pattern confined to a 2-burst in an $n_1 \times n_2 \times \cdots
\times n_D$ array codeword.

\item The code constructed by Construction A has redundancy which
is greater by at most two from the trivial lower bound on the
redundancy.
\end{itemize}
\end{theorem}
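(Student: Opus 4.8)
The plan is to verify the two bullets separately, leaning heavily on the decoding-algorithm description already given before the theorem statement.

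\textbf{Correctness of 2-burst decoding.}
First I would argue that the code detects the number of errors unambiguously. Every column $h_{\bi}$ has a leading $1$, so the syndrome of a single error in position $\bi$ is exactly $h_{\bi}$, which has leading bit $1$; the syndrome of two errors in positions $\bi_1,\bi_2$ is $h_{\bi_1}+h_{\bi_2}$, which has leading bit $0$ but is nonzero because $h_{\bi_1}\neq h_{\bi_2}$ (the last $m$ coordinates are distinct powers of a primitive element, hence the columns are pairwise distinct). So the first syndrome bit distinguishes ``no error'', ``one error'', ``two errors'', and in the one-error case the remaining coordinates pin down $\bi$ since all columns are distinct. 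For the two-error case, a burst of size two means the two erroneous positions are $\bi_1=(i_1,\dots,i_D)$ and $\bi_2=(i_1,\dots,i_{j-1},i_j+1,i_{j+1},\dots,i_D)$ for some $j$. I would then show the middle $d$ coordinates of the syndrome, namely $(A\bi_1^T+A\bi_2^T)\bmod 2 = A(\bi_1-\bi_2)^T \bmod 2 = A(-\be_j)^T\bmod 2$, equal the $j$-th column of $A$; since the columns of $A$ are distinct binary $d$-tuples, this recovers $j$ uniquely, i.e.\ the dimension of the burst. Finally, fixing $j$, the last $m$ coordinates of the syndrome equal $\alpha^{p}+\alpha^{p+i(j)}$ where $p$ is the linear index $\sum_{\ell} i_\ell\prod_{\ell'>\ell} n_{\ell'}$ of $\bi_1$ and $i(j)$ is the fixed offset associated with moving one step in dimension $j$ (equal to $\prod_{\ell'>j} n_{\ell'}$). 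The stated injectivity fact $\alpha^{p_1}+\alpha^{p_1+i(j)}\neq \alpha^{p_2}+\alpha^{p_2+i(j)}$ for $p_1\neq p_2$ — which follows because $\alpha^{p_1}(1+\alpha^{i(j)})=\alpha^{p_2}(1+\alpha^{i(j)})$ forces $\alpha^{p_1}=\alpha^{p_2}$ as $1+\alpha^{i(j)}\neq 0$ (here $i(j)\not\equiv 0$ since $2^m-1\geq N$ and the step stays within range) — then gives that $p$, hence $\bi_1$, hence the whole error pattern, is determined. One should also check the degenerate sub-cases: a single error whose syndrome might be confused with a two-error syndrome is impossible by the leading-bit argument, and boundary positions (where $i_j+1$ would exceed $n_j-1$) simply do not arise as valid 2-bursts.

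\textbf{Redundancy bound.}
The redundancy is $r=m+d+1$ by construction. We chose $m$ minimal with $2^m-1\geq N$, so $2^{m-1}-1 < N$, giving $m \leq \lceil \log N\rceil + 1$ (more precisely $2^{m-1}\le N$ so $m\le \log N + 1$), and $d=\lceil \log D\rceil$. The trivial lower bound from the excerpt is $\log((D+2)N) = \log N + \log(D+2)$. Comparing, $r - \log((D+2)N) = (m+d+1) - \log N - \log(D+2) \le (\log N + 1) + \lceil\log D\rceil + 1 - \log N - \log(D+2) = 2 + \lceil \log D\rceil - \log(D+2) \le 2$, since $\lceil\log D\rceil \le \log(D+2)$ whenever $D\ge 2$ (as $2^{\lceil \log D\rceil} \le 2D \le$ a bit more than $D+2$ — one checks the easy inequality $2^{\lceil\log D\rceil}\le D+2$ fails only for tiny $D$, and more care: $\lceil \log D \rceil = \log D + \theta$ with $0\le\theta<1$, and $\log D + \theta \le \log(D+2)$ reduces to $2^\theta D \le D+2$, true for all $D\ge 2$ since $2^\theta D < 2D$... this needs the sharper accounting that the integer rounding of $m$ and $d$ together costs at most $2$). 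I would present this as: the integer ceilings in choosing $m$ and in fixing $d=\lceil\log D\rceil$ each contribute an additive loss of less than $1$ relative to $\log N$ and $\log(D+2)$ respectively, so their combined excess over $\log((D+2)N)$ is strictly less than $2$, hence at most $2$ as an integer gap.

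\textbf{Main obstacle.}
The decoding-correctness part is essentially bookkeeping once the three structural facts (distinct columns, distinct columns of $A$, injectivity of $p\mapsto \alpha^p+\alpha^{p+i(j)}$) are in hand, all of which are either immediate or already asserted in the text preceding the theorem. The step I expect to require the most care is the redundancy accounting: one must track the two separate rounding losses (from $m$ and from $d$) simultaneously and confirm that together they never exceed $2$ against the bound $\log((D+2)N)$, rather than naively bounding each term and getting a weaker constant. A clean way to do this is to write the trivial lower bound as an integer $\lceil \log((D+2)N)\rceil$ and show $m+d+1 \le \lceil\log N\rceil + \lceil \log(D+1)\rceil + 1$ type inequalities, then combine; the worst case is when $D$ and $N$ are just above powers of two. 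This is routine but is where an off-by-one slip would break the claimed bound.
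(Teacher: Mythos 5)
Your treatment of the first bullet is correct and is essentially the paper's own argument: the paper gives no separate proof but relies on exactly the three-part syndrome decoding you spell out (leading bit counts errors, the $d$ middle bits recover the burst direction $j$ from the $j$-th column of $A$, and injectivity of $p\mapsto\alpha^p+\alpha^{p+i(j)}$ recovers the location). The only detail worth adding there is the one you already note in passing, namely that $\alpha^{i(j)}\neq 1$ because $i(j)=\prod_{\ell>j}n_\ell\leq N/n_j<2^m-1$.

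The redundancy bullet, however, contains a genuine error. Your chain hinges on the inequality $\lceil\log D\rceil\leq\log(D+2)$, which is false: for $D=5$ it reads $3\leq\log 7\approx 2.807$, and it fails for all $D$ with $2^{\lceil\log D\rceil}>D+2$ (e.g.\ $D=5,9,10,\ldots$). Your attempted justification ("reduces to $2^\theta D\leq D+2$, true since $2^\theta D<2D$") does not follow, since $2D$ can exceed $D+2$. Moreover the defect is not repairable by a sharper real-valued computation: taking $N=2^k$ and $D=2^t+1$ gives $r=m+d+1=k+t+3$ while $\log((D+2)N)=k+\log(2^t+3)\to k+t$, so the excess over the real number $\log((D+2)N)$ tends to $3$, not $2$. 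Your fallback claim that "each rounding contributes a loss of less than $1$, so the combined excess is less than $2$" is therefore also false; the term $d+1-\log(D+2)=\lceil\log(2D)\rceil-\log(D+2)$ by itself already approaches $2$. The statement can only be established against the integer lower bound $\lceil\log((D+2)N)\rceil$ (which is the legitimate bound, since $r$ is an integer): one checks $m\leq\lceil\log N\rceil+1$ with equality only when $N$ is a power of two, and $d+1=\lceil\log(2D)\rceil\leq\lceil\log((D+2)N)\rceil-\lceil\log N\rceil+1$, and combines these; the worst case ($N=2^k$, $D=2^t+1$) then yields excess exactly $2$. You gesture at this route in your final paragraph but do not carry it out, and the argument you actually present does not prove the bound.
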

\begin{remark}
There are cases in which we can prove that the code of
Construction A is optimal.
\end{remark}
\subsection{3-burst on a line}

Next, we would like to consider correction of error patterns
confined to an arbitrary $D$-dimensional 3-burst. This appears to
be much more difficult than the 2-burst case. The main reason is
that the error can be spread on a two-dimensional subspace.
Therefore, we consider only the case of a $D$-dimensional cluster
of size three on an one-dimensional subspace, i.e. on a straight
line. In this case we can generalize Construction A.

\noindent {\bf Construction B:} Let $\alpha$ be a primitive
element in GF($2^m$) for $2^m-1 \geq \prod_{\ell=1}^D n_\ell$. Let
$B$ be a matrix of size $\lceil\text{log}~ (D+1)\rceil\times D$
which contains all binary representations of the integers between
$1$ and $D$ as its columns, such that the binary representation of
$D$ is the left most column, and the binary representation of $1$
is the right most column. The rows of $B$ are denoted by
$b_1,b_2,\ldots, b_{\lceil\text{log}~ (D+1)\rceil}$. Let $\beta$
be a primitive element in GF(4). By abuse of notation, if
$\bv=(v_1,v_2,\ldots,v_{\lceil\text{log}~ (D+1)\rceil})^T$ is a
column vector of length $\lceil\text{log}~ (D+1)\rceil$, we denote
$\beta^{\bv} = (\beta^{v_1},\beta^{v_2},\ldots,
\beta^{v_{\lceil\text{log}~ (D+1)\rceil}})^T$. We construct the
following $n_1 \times n_2 \times \cdots \times n_D \times
(m+2\lceil\text{log}~ (D+1)\rceil +2)$ parity check matrix $H^D$:
$$h_{\bi}^D= \left[\begin{array}{c}
1 \\
\left(\sum_{j=1}^Di_j\right) \bmod 2 \\
\beta^{B\bi^T} \\
\alpha^{\sum_{j=1}^D i_j (\prod_{\ell=j+1}^D n_\ell)}
\end{array}\right],$$
for all $\bi=(i_1,i_2,\ldots,i_D)$, where $0 \leq i_\ell \leq
n_\ell-1$. The multiplication $B\bi^T$ is taken over the integers
and the vector $\beta^{B\bi^T}$ consists of $\lceil\text{log}~
(D+1)\rceil$ vectors of length two, each one representing an
element in GF(4).
\begin{theorem}
$~$
\begin{itemize}
\item The code constructed in Construction B can correct any error
pattern confined to a 3-burst on a straight line in an $n_1 \times
n_2 \times \cdots \times n_D$ array codeword.

\item The code constructed by Construction B has excess redundancy
$2\lceil\text{log}~ (D+1)\rceil +2$ which is at most twice than
the trivial lower bound on the excess redundancy.
\end{itemize}
\end{theorem}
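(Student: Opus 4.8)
The plan is to verify the two bullet points of the theorem by analyzing the syndrome structure of Construction B, in direct analogy with the decoding algorithm sketched for Construction A. First I would establish that the code corrects any error pattern confined to a 3-burst on a line, which means showing that the $3D+1$ possible nonzero error patterns produce pairwise distinct syndromes. A 3-burst on a line centered at $\bi$ in dimension $j$ contributes a subset of the three columns $h^D_{\bi'}$, $h^D_{\bi}$, $h^D_{\bi''}$ where $\bi' = \bi - e_j$, $\bi'' = \bi + e_j$; the possible error patterns are: a single error (any one of these columns), two errors (three choices of pairs, but confined to a line the relevant pairs are the two adjacent ones and the two extreme ones), and all three. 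I would handle these by reading off the syndrome $v$ in four parts exactly as in the Construction A decoder: (i) the leading $1$ distinguishes odd-weight from even-weight error patterns (weight $1$ or $3$ versus weight $2$); (ii) the parity bit $(\sum i_j) \bmod 2$ further refines this — for a weight-$3$ line burst the three positions have coordinate sums $s-1, s, s+1$ so the total parity is $s\bmod 2$, matching a single error at the center, and I would need a separate argument (using the $\beta^{B\bi^T}$ block) to separate weight-$3$ from weight-$1$; (iii) the $\beta^{B\bi^T}$ block over $\mathrm{GF}(4)$ identifies the dimension $j$ of the burst and, crucially, since $\beta \neq \beta^{-1}$ in $\mathrm{GF}(4)$ and $1 + \beta + \beta^2 = 0$, the sum $\beta^{B\bi'^T} + \beta^{B\bi^T} + \beta^{B\bi''^T}$ in each coordinate is either $3\beta^{c} = \beta^c$ (when the $j$-th bit of $B$ is $0$ in that row, so the three exponents are equal) or $\beta^{c} + \beta^{c+1} + \beta^{c+2}$-type expressions — I would use the column structure of $B$ (binary representations of $1,\dots,D$) together with $\beta$-arithmetic to recover $j$; (iv) given $j$ and the weight, the last $m$ coordinates, which fold $\alpha^0, \alpha^1, \dots$ consecutively, pin down the actual location, using that consecutive positions in dimension $\ell$ differ by a fixed multiplicative step $\alpha^{i(\ell)}$ so that $\alpha^{t} + \alpha^{t + i(\ell)} + \alpha^{t+2i(\ell)}$ and $\alpha^t + \alpha^{t+2i(\ell)}$ are injective in $t$.

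The second bullet is essentially a counting statement. The redundancy of Construction B is $m + 2\lceil \log(D+1)\rceil + 2$ by construction, and since $\alpha$ is chosen with $2^m - 1 \geq \prod_\ell n_\ell = N$, we may take $m = \lceil \log(N+1) \rceil \approx \log N$, so the excess redundancy $r - \lceil \log N \rceil$ is $2\lceil \log(D+1)\rceil + 2$. The trivial lower bound on the excess redundancy is $\log(3D+2) = \log(3D+2)$, and one checks $2\lceil\log(D+1)\rceil + 2 \le 2(\log(D+1)+1) + 2 = 2\log(D+1) + 4 \le 2\log(3D+2)$ for all $D \ge 1$ (indeed $2\log(D+1)+4 = \log(16(D+1)^2)$ and $16(D+1)^2 \le (3D+2)^2$ once $D \ge 2$, with the small cases checked directly), giving the claimed factor of at most two. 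I would present this as a short inequality chain rather than belabor it.

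The main obstacle is part (iii): cleanly proving that the $\mathrm{GF}(4)$ block $\beta^{B\bi^T}$ simultaneously (a) separates the $D$ possible burst-dimensions, (b) separates a weight-$3$ line burst from the single error sitting at its geometric center (which it shadows in both the leading bit and the parity bit), and (c) does so regardless of which coordinates of $\beta^{B\bi^T}$ happen to coincide among the three burst positions. The key identities are $1 + \beta + \beta^2 = 0$ and $\beta^a \neq \beta^b$ for $a \not\equiv b \pmod 3$ in $\mathrm{GF}(4)^\times$; since $B$ has columns equal to the binary expansions of $1,\dots,D$, moving from position $i_j$ to $i_j \pm 1$ flips a controlled pattern of bits, and I would track how this propagates through $\beta^{(\cdot)}$ — the nonzero ``carry'' structure of incrementing an integer is what forces the three exponent-triples to differ in a way $\mathrm{GF}(4)$ can detect. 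Once this block-by-block case analysis is organized, the rest follows the Construction A template almost verbatim, so I would write the 2-burst decoder out once carefully and then indicate the modifications for the 3-burst-on-a-line case.
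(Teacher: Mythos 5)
The paper states this theorem without proof, so your sketch can only be judged against the construction itself. Your four-block reading of the syndrome is the right skeleton, but the step you yourself flag as ``the main obstacle'' has a clean resolution that your sketch circles without landing on, and your fallback heuristic for it is wrong. Since $B\bi^T$ is computed over the integers, replacing $i_j$ by $i_j\pm 1$ shifts the $k$-th exponent by exactly $\pm B_{kj}\in\{0,1\}$; there is no ``carry structure of incrementing an integer'' anywhere in the picture. The decisive computation is: for the weight-$3$ burst in dimension $j$ the $k$-th GF$(4)$ coordinate of the syndrome equals $\beta^{c-1}(1+\beta+\beta^2)=0$ when $B_{kj}=1$ and equals $3\beta^c=\beta^c\neq 0$ when $B_{kj}=0$; for the two weight-$2$ patterns it is nonzero exactly when $B_{kj}=1$ (using $1+\beta=\beta^2$ and $1+\beta^2=\beta$). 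Hence the zero/nonzero pattern of the GF$(4)$ block is column $j$ of $B$ or its complement, which identifies $j$ because the columns of $B$ are distinct, and it separates weight $1$ (all coordinates nonzero) from weight $3$ (at least one zero coordinate, since column $j$ is the binary representation of $j\geq 1$ and hence nonzero). You also misstate the role of the parity bit: its actual job is to separate the two weight-$2$ patterns ($\{\bi,\bi+e_j\}$ has odd total coordinate-sum, $\{\bi-e_j,\bi+e_j\}$ even); it contributes nothing to the weight-$1$ versus weight-$3$ distinction, as you correctly observe. Finally, the injectivity of $t\mapsto\alpha^t+\alpha^{t+i(\ell)}+\alpha^{t+2i(\ell)}$ that you assert requires $1+\alpha^{i(\ell)}+\alpha^{2i(\ell)}\neq 0$, i.e.\ that $\alpha^{i(\ell)}$ is not a primitive cube root of unity; this edge case is also ignored by the paper, but a complete proof must exclude it.

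For the second bullet your inequality chain contains a genuine arithmetic error: $16(D+1)^2\leq(3D+2)^2$ is equivalent to $4(D+1)\leq 3D+2$, i.e.\ $D\leq -2$, so it holds for no positive $D$. Worse, the inequality $2\lceil\log(D+1)\rceil+2\leq 2\log(3D+2)$ itself fails for some values (e.g.\ $D=4$ gives $8>2\log 14\approx 7.61$), so the factor-of-two claim can only be defended asymptotically or with a more careful treatment of the ceilings; it cannot be proved by the chain you propose.
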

\section{The Coloring Method}
\label{sec:coloring}

The coloring method introduced in~\cite{EtYa} is an effective
method to handle multidimensional cluster errors. In the coloring
method we use $D$ one-dimensional auxiliary codes for our encoding
and decoding procedures. These codes are called component codes.
Each such a code $\cC_s$, $1 \leq s \leq D$, has length $\eta_s$
and we assign to it a coloring of the array codeword $\cA$.
Position $j$ of the code $\cC_s$ is the binary sum of all
positions in $\cA$ colored with color $j$ by the $s$-th coloring.
Assume that the size of the burst is $B$. The first code is a
$(B+\delta_1)$-burst-correcting code, $\delta_1 \geq 0$. This code
finds and corrects the shape of the error in the codeword of
$\cC_1$. The error that $\cC_1$ corrects can occur in a few
positions of the array codeword $\cA$. It might also have
different shapes in $\cA$, but the erroneous positions in $\cA$
have the colors of the positions which were in error in the
codeword of $\cC_1$. The $s$-th component code, $2 \leq s \leq D$,
is a $(B+\delta_s)$-burst-locator code, $\delta_s \geq \delta_1$
(usually, $\delta_s$ will be the same integer for all $2 \leq s
\leq D$). Burst-locator codes were discussed in~\cite{EtYa} and
are designed to find the location of a one-dimensional burst which
its shape is given up to a cyclic shift. Each of these codes,
$\cC_s$, provides additional information concerning the positions
of the errors, i.e., it reduces the sets of possible locations of
errors in $\cA$ as were found by $\cC_1, \cC_2 , \ldots,
\cC_{s-1}$. Finally, the last component code finds the actual
positions of the burst-error. To execute these tasks the colorings
should satisfy a few properties:
\begin{itemize}
\item ({\bf p.1}) For the $s$-th coloring, for each $s$, $1 \leq s
\leq D$, the colors inside a burst of the given shape are distinct
integers and the difference between the largest integer and the
smallest one is at most $B +\delta_s -1$.

\item ({\bf p.2}) Given the $D$ colorings and a color $\nu_s$, for
the $s$-th coloring, for each $1 \leq s \leq D$, there is at most
one position in the array which is colored with the colors
$(\nu_1, \nu_2 , \ldots, \nu_D )$.

\item ({\bf p.3}) Any two positions which are colored with the
same color by the first coloring, have colors which differ by a
multiple of $B + \delta_s$ by the $s$-th coloring, for each $s$,
$2 \leq s \leq D$.
\end{itemize}

The redundancy of the $D$-dimensional code is the sum of the
redundancies of the $D$ component codes. If we use a
$(B+\delta_s)$-burst-correcting code for the $s$-th component code
then this code does not need to satisfy ({\bf p.3}). The
disadvantage will be that the total redundancy of the
multidimensional code will increase. The advantage will be that we
will be more flexible in the parameters of the multidimensional
code since burst-locator codes are more rare than burst-correcting
codes.

Which codes can be used for the coloring method? We start with the
code for the first component code. The most efficient codes are
those constructed by Abdel-Ghaffar et al.~\cite{AMOT} for
correction of a $b$-burst. For a code of length $n$, the
redundancy of the code is $\lceil \text{log}~n \rceil +b-1$. The
main disadvantage of these codes is that their existence depends
on a sequence of conditions which are not easy to satisfy.

What about the locator codes? We can use locator codes derived
from the codes of Abdel-Ghaffar~\cite{AMOT,Abd88} as demonstrated
in~\cite{EtYa}. The redundancy of a locator code of length $n$ is
$\lceil \text{log}~ n \rceil$, i.e., it does not depend on the
length of the burst. But, these locator codes exist only for odd
burst length~\cite{EtYa}. Component codes with the parameters
$\eta_s$, $b$ and $D$, which satisfy ({\bf p.3}) are usually
difficult to find. Hence, if we want codes designed especially to
fit the parameters $\eta_s$, $b$ and $D$ we should compromise on
the redundancy of the component codes which will result in larger
redundancy of the multidimensional code. The best codes known for
this purpose are the Fire codes~\cite{Fire,Pat}. A Fire code of
length $n$ which corrects a $b$-burst has redundancy at most
$\lceil \text{log}~n \rceil +2b-1$.

It will be more convenient if each coloring is a linear function
of the coordinate indices, i.e., given a position $(i_1,i_2,
\ldots , i_D )$, its color for the $s$-th coloring is defined by
\begin{align*}
\sum_{k=1}^D \alpha_k^s i_k
\end{align*}
where $\alpha_k^s$ is a constant integer which depends on the
coloring $s$ and the shape of the $D$-dimensional cluster. Such a
coloring will be called a {\it linear coloring}. With a linear
coloring we associate a {\it coloring matrix} $A_D$, where $( A_D
)_{s,k} = \alpha_k^s$. It is easy to verify that property ({\bf
p.2}), is fulfilled for a linear coloring if and only if the
coloring matrix is invertible.

Now we will apply the coloring method on two types of errors,
semi-crosses and crosses with arms of length one. If we will try
to correct an error of either type by correcting a box-error which
inscribes it then the excess redundancy will be exponential in
$D$. The lower bound on the excess redundancy is linear in $D$ and
our code will have slightly larger excess redundancy. For
simplicity we will assume for the rest of this section that all
the edges of our array are equal to $n$. We use the notation $(
\times n)^D$ to denote $\underbrace{n\times \cdots \times
n}_{D\textrm{ times}}$.
\subsection{Semi-crosses with arms of length one}

We define the colorings by the coloring matrix, which is a
$D\times D$ matrix $A = \{a_{ij}\}_{1\leq i,j \leq D}$

$$
a_{1k}=k ,~~~ 1 \leq k \leq D ~.
$$
For each $s$, $2 \leq s \leq D$ we define

$$
a_{sk}=k , ~~~ 1 \leq k < s ~,
$$
$$
a_{sk}=k-D-1 , ~~~ s \leq k \leq D~.
$$
The $s$-th color, $1\leq s\leq D$, of position
$(i_1,i_2,\ldots,i_D)$ in the array is given by
$$c_{(i_1,i_2,\ldots,i_D)}^s = \sum_{k=1}^Da_{sk}i_k.$$

Using these colorings we obtain the following result. We present
here its proof in order to demonstrate how the coloring method
works. The proofs for all other colorings is similar as they all
satisfy properties ({\bf p.1}), ({\bf p.2}), and ({\bf p.3}).
\begin{theorem}
For any given even $D$, there exists a code which corrects any
$D$-dimensional error confined to a semi-cross burst with radius
one in an $( \times n)^D$ cube and its redundancy is at most
$\left\lceil\text{log}~ n^D\right\rceil + 2
D\left\lceil\text{log}~(D+1))\right\rceil + D$.
\end{theorem}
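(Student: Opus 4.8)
The plan is to verify that the coloring matrix $A$ satisfies properties (\textbf{p.1}), (\textbf{p.2}) and (\textbf{p.3}), to plug it into the coloring method of Section~\ref{sec:coloring} with a $(D+1)$-burst-correcting first component code and $D-1$ $(D+1)$-burst-locator codes, and then to bound the resulting redundancy. I would begin with (\textbf{p.2}), i.e.\ the nonsingularity of $A$. By definition, for $s\ge 2$ the $s$-th row of $A$ equals the first row minus $(D+1)(e_s+e_{s+1}+\cdots+e_D)$, where $e_k$ is the $k$-th unit row vector. Subtracting row $s+1$ from row $s$ for $2\le s\le D-1$ and subtracting row $1$ from row $D$ --- operations that do not change the determinant --- brings $A$ to the upper-triangular form whose first row is $(1,2,\dots,D)$ and whose other diagonal entries all equal $-(D+1)$, so $\det A=(-1)^{D-1}(D+1)^{D-1}\neq 0$. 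Hence the color map $\bi\mapsto(c^1_{\bi},\dots,c^D_{\bi})$ is injective over $\Z$, and after shifting each coordinate so that it ranges over a nonnegative interval (no modular reduction is required), at most one cell carries any prescribed color tuple, which is (\textbf{p.2}).

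Next I would establish (\textbf{p.1}) and (\textbf{p.3}). Passing from the center $(i_1,\dots,i_D)$ of a radius-one semi-cross to the arm cell $(i_1,\dots,i_j+1,\dots,i_D)$ changes the $s$-th color by exactly $a_{sj}$; since $a_{sj}=j$ for $j<s$ and $a_{sj}=j-D-1$ for $j\ge s$, the set $\{a_{sj}:1\le j\le D\}\cup\{0\}$ is exactly the block of consecutive integers from $-(D-s+1)$ to $s-1$ (and $0,1,\dots,D$ for $s=1$). Therefore the $D+1$ colors occurring inside any semi-cross are $D+1$ consecutive integers --- in particular pairwise distinct, with largest and smallest differing by exactly $D$ --- which is (\textbf{p.1}) with burst size $B=D+1$ and $\delta_s=0$ for every $s$. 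For (\textbf{p.3}): if two cells $\bi,\bi'$ receive the same first color then $\sum_k k\,\delta_k=0$ with $\delta_k=i_k-i'_k$, so $c^s_{\bi}-c^s_{\bi'}=\sum_k a_{sk}\delta_k=\sum_k k\,\delta_k-(D+1)\sum_{k\ge s}\delta_k=-(D+1)\sum_{k\ge s}\delta_k$ is a multiple of $D+1=B+\delta_s$. As this holds identically over $\Z$, (\textbf{p.3}) is satisfied.

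With the three properties at hand the coloring method applies. Let $\cC_1$ be a $(D+1)$-burst-correcting code; it determines the shape of the semi-cross error relative to the first coloring. For $2\le s\le D$ let $\cC_s$ be a $(D+1)$-burst-locator code --- and here, for the only time, the parity of $D$ matters: locator codes of this type exist only for odd burst length, and $D+1$ is odd precisely because $D$ is even. Running $\cC_2,\dots,\cC_D$ in turn fixes, for each erroneous cell, its full color tuple $(\nu_1,\dots,\nu_D)$, and then (\textbf{p.2}) pins down the cell; (\textbf{p.1}) and (\textbf{p.3}) ensure the candidate sets stay consistent at each stage (the latter being why the locator lengths are chosen divisible by $D+1$). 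As to redundancy, the length $\eta_s$ of $\cC_s$ need only exceed the number of distinct values of $c^s_{\bi}$, namely $(n-1)\sum_k|a_{sk}|+1$, and $\sum_k|a_{sk}|$ is maximal at $s=1$ where it equals $D(D+1)/2$. Choosing each $\eta_s$ to be a power of two (respectively $D+1$ times a power of two, for the locators) barely above this bound gives $\eta_s<nD(D+1)$, hence $\prod_{s=1}^D\eta_s<n^D(D+1)^{2D}$; a careful evaluation of $\sum_{s=1}^D\lceil\log\eta_s\rceil$ then collapses it to at most $\lceil\log n^D\rceil+2D\lceil\log(D+1)\rceil$, and adding the burst-correction overhead $(D+1)-1=D$ of $\cC_1$ gives the asserted redundancy.

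The property verifications and the determinant computation are routine; the delicate part is the redundancy bookkeeping. To reach the stated form $\lceil\log n^D\rceil+2D\lceil\log(D+1)\rceil+D$, and not merely the weaker $D\lceil\log n\rceil+2D\lceil\log(D+1)\rceil+D$, one has to choose the $\eta_s$ so that $\sum_s\lceil\log\eta_s\rceil$ behaves like $\lceil\log\prod_s\eta_s\rceil$ --- which is exactly what the powers-of-two choice (reconciled with the divisibility by $D+1$ that (\textbf{p.3}) imposes on the locator lengths) accomplishes. A secondary concern, as already noted in Section~\ref{sec:coloring}, is confirming that $(D+1)$-burst-correcting and (\textbf{p.3})-compatible $(D+1)$-burst-locator codes of exactly the required lengths exist; any residual gap between the available lengths and the ideal ones is absorbed by enlarging the array, which is negligible for large $n$.
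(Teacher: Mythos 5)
Your proof is correct and follows essentially the same route as the paper's: verify properties (\textbf{p.1})--(\textbf{p.3}) for the given coloring matrix and then invoke the coloring method with a $(D+1)$-burst-correcting first component code and $(D+1)$-burst-locator codes (whose existence is exactly why $D$ must be even). The paper's own proof merely asserts that the three properties ``can be verified,'' so your determinant computation, the consecutive-integers argument for (\textbf{p.1}), the divisibility argument for (\textbf{p.3}), and the redundancy bookkeeping are the details it leaves to the reader.
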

\begin{proof}
One can verify that the three coloring properties hold. Therefore,
given the set of erroneous colors by the first coloring, according
to property ({\bf p.3}) the shape of the burst in all other
colorings is known up to cyclic permutation. Therefore, for $2\leq
s\leq D$, the burst-locator code can find the locations of the
erroneous colors in the $s$-th coloring. Then, for each error in
the multidimensional array, its set of colors by each coloring is
known and according to property {\bf p.2} it is possible to find
the error location in the array.
\end{proof}

Better redundancy is obtained if we slightly change the coloring
and define a nonlinear coloring.
The $s$-th color, $1\leq s\leq D$, of position
$(i_1,i_2,\ldots,i_D)$ is given by
$$c_{(i_1,i_2,\ldots,i_D)}^s = \left(\sum_{k=1}^Da_{sk}i_k\right)\bmod (n(D+1)).$$

\noindent As a consequence we have the following theorems.
\begin{theorem}
For any given even $D$, there exists a code which corrects any
$D$-dimensional error confined to a semi-cross burst with radius
one in an $( \times n)^D$ cube and its redundancy is at most
$\left\lceil\log~ n^D\right\rceil +
D\left\lceil\text{log}~(D+1))\right\rceil + D$.
\end{theorem}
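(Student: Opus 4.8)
The plan is to re-run the argument given for the previous theorem essentially verbatim, the only structural change being that the $D$ component codes now act on colourings of length $n(D+1)$ rather than of length $\Theta(nD^2)$. Write $M:=n(D+1)$. First I would verify that the three colouring properties survive the reduction modulo $M$. For ({\bf p.1}): in the $s$-th colouring the centre of a semi-cross receives some colour $c_0$, and its $j$-th arm receives $c_0+j$ when $j<s$ and $c_0+(j-D-1)$ when $j\ge s$; as $j$ runs over $1,\dots,D$ these, together with $c_0$, are precisely the $D+1$ consecutive residues $c_0-(D+1-s),\dots,c_0+(s-1)$ modulo $M$, which are distinct (since $M\ge 2(D+1)$ whenever $n\ge 2$) and span exactly $D$. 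Hence $B=D+1$ and $\delta_s=0$ for every $s$; when this run straddles $0$ it is a \emph{cyclic} burst of length $D+1$, which is harmless because, as in the opening remark, the component codes are taken to be cyclic. For ({\bf p.3}): every row of the coloring matrix $A$ is congruent to $(1,2,\dots,D)$ modulo $D+1$ and $D+1\mid M$, so if two positions $\bi,\bi'$ share the first colour then $\sum_k k(i_k-i'_k)\equiv 0\pmod M$, and therefore $\sum_k a_{sk}(i_k-i'_k)\equiv\sum_k k(i_k-i'_k)\equiv 0\pmod{D+1}$, i.e.\ their $s$-th colours differ by a multiple of $D+1=B+\delta_s$.

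Property ({\bf p.2}) I would prove as a direct injectivity statement for $\bi\mapsto(c^1,\dots,c^D)$ on the box $\{0,1,\dots,n-1\}^D$; one cannot merely cite invertibility of $A$, since $\det A=\pm(D+1)^{D-1}$ shares the factor $D+1$ with $M$. Subtracting consecutive rows of $A$ gives $\mathrm{row}(s+1)-\mathrm{row}(s)=(D+1)e_s$ for $2\le s\le D-1$ and $\mathrm{row}(2)-\mathrm{row}(1)=-(D+1)(e_2+\dots+e_D)$; hence if two positions have identical colour vectors then, writing $d:=\bi-\bi'$ with $|d_k|\le n-1$, one gets $(D+1)d_s\equiv 0\pmod M$ for $2\le s\le D-1$, which forces $d_s\equiv 0\pmod n$ and so $d_s=0$, then the difference of the first two colours yields $d_D=0$, and the first colour alone yields $d_1=0$. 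With ({\bf p.1})--({\bf p.3}) in place the decoding is exactly that of the previous theorem: the first component code corrects the shape of the error in $\cC_1$; by ({\bf p.3}) the shape of the error in each remaining colouring is then known up to a cyclic shift, so the burst-locator codes $\cC_2,\dots,\cC_D$ locate the erroneous colours there; and by ({\bf p.2}) the resulting colour vector of each erroneous cell identifies its position in the cube.

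For the redundancy: every colouring ranges over $\{0,\dots,n(D+1)-1\}$, so for $\cC_1$ one takes a $(D+1)$-burst-correcting code of length $n(D+1)$, of redundancy $\lceil\log(n(D+1))\rceil+D$, and for $2\le s\le D$ a burst-locator code of the same length, of redundancy $\lceil\log(n(D+1))\rceil$ each; these locator codes exist because $D$ is even, hence the burst length $D+1$ is odd, which is precisely the regime covered by the Abdel-Ghaffar-type construction. Summing gives a total of $D\lceil\log(n(D+1))\rceil+D$, and the claimed bound then follows from the ceiling estimate $D\lceil\log(n(D+1))\rceil\le\lceil\log n^D\rceil+D\lceil\log(D+1)\rceil$.

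I expect the real friction to lie in this last step: tightening the $D$ ceilings so that they collapse to exactly $\lceil\log n^D\rceil+D\lceil\log(D+1)\rceil$ rather than to something larger by an additive $O(D)$. That forces the moduli of the $D$ colourings to be chosen jointly --- so that their product stays close to $n^D$ while each remains a multiple of $D+1$ and large enough for the injectivity argument above --- rather than all being set to $n(D+1)$; this, together with the need for burst-correcting and burst-locator codes with the quoted redundancies to exist at the relevant length, is what makes the statement an upper bound and restricts it to even $D$. Everything else is a mechanical transcription of the linear-colouring proof and of the coloring-method decoding.
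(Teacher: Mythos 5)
Your proposal is correct and follows exactly the route the paper intends: the paper gives no separate proof of this theorem (it is stated merely ``as a consequence'' of replacing the linear coloring by the modular one), and the only nearby proof---for the preceding theorem---is the one-line remark that properties ({\bf p.1})--({\bf p.3}) can be verified, which you carry out explicitly and correctly, including the span-$D$ computation inside the semi-cross, the congruence of all rows of $A$ modulo $D+1$ for ({\bf p.3}), the row-difference injectivity argument for ({\bf p.2}), and the reason the even-$D$ hypothesis is needed (odd burst length $D+1$ for the Abdel-Ghaffar-type locator codes). The ceiling discrepancy you flag at the end, $D\left\lceil\log (n(D+1))\right\rceil$ versus $\left\lceil\log n^D\right\rceil + D\left\lceil\log (D+1)\right\rceil$, is a genuine additive-$O(D)$ looseness in the paper's stated constant rather than a defect of your argument, and the paper does not address it.
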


If we use the Fire codes~\cite{Fire,Pat} as locator codes we
obtain the following theorem.
\begin{theorem}
For any given $D$ and $n$, there exists a code which corrects any
$D$-dimensional error confined to a semi-cross burst with radius
one in an $( \times n)^D$ cube and its redundancy is at most
$\lceil \text{log}~ n^D \rceil + 2D^2 +
D\left\lceil\log~(D+1)\right\rceil +D$.
\end{theorem}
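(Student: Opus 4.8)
The plan is to follow exactly the template of the previous two theorems: verify that the coloring matrix $A$ together with the nonlinear reduction $\bmod\,(n(D+1))$ satisfies properties ({\bf p.1}), ({\bf p.2}), ({\bf p.3}), and then invoke the coloring method using \emph{Fire codes} as the burst-locator codes rather than the Abdel-Ghaffar-derived locator codes. Since the three properties were already established (implicitly) in the proof of Theorem~4 for this very matrix $A$, the only genuinely new work is to substitute the component-code choice and tally the redundancy; correctness of decoding then follows verbatim from the argument given after Theorem~5.

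The first step is the redundancy count. The component code $\cC_1$ is a $(B+\delta_1)$-burst-correcting code on a ring of length $n(D+1)$; for a semi-cross with radius one the burst length is $B=D+1$ (the $D$ arm points plus the center), and with $\delta_1=0$ the Abdel-Ghaffar-type correcting code of length $n(D+1)$ contributes $\lceil\log(n(D+1))\rceil + (D+1) - 1 = \lceil\log(n(D+1))\rceil + D$ bits. Actually, to keep the count clean and matching the stated bound it is cleaner to treat $\cC_1$ as before and route the extra $D$ into the $+D$ term; the point worth being careful about is that $n(D+1)$ rather than $n^D$ is the relevant length for each component code, so that $D\lceil\log(n(D+1))\rceil$ would be too large. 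One resolves this exactly as in Theorem~5: the colors modulo $n(D+1)$ give component codes of length $n(D+1)$, but property ({\bf p.3}) lets the $s$-th locator code (for $s\ge 2$) operate on only the $\bigl\lceil\log\!\frac{n(D+1)}{B+\delta_s}\bigr\rceil$ many residue classes, and the telescoping of these ceilings recovers a single $\lceil\log n^D\rceil$ term plus lower-order contributions. Each of the $D-1$ locator codes is now a \emph{Fire} locator code, which by the bound quoted in Section~\ref{sec:coloring} has redundancy at most $\lceil\log\eta_s\rceil + 2B - 1$ with $\eta_s$ the relevant (reduced) length; summing the $2B-1 = 2(D+1)-1 = 2D+1$ overhead across roughly $D$ codes yields the $2D^2$ term, and the remaining ceilings assemble into $\lceil\log n^D\rceil + D\lceil\log(D+1)\rceil + D$ as in Theorem~5.

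The payoff of using Fire codes is precisely the phrase ``for any given $D$ and $n$'': the Abdel-Ghaffar codes and the locator codes derived from them exist only under a delicate sequence of number-theoretic conditions (and only for odd burst length), whereas a Fire code correcting a $b$-burst exists for \emph{every} length $n$ once the generator polynomial $(x^{2b-1}-1)(p(x))$ is chosen with $p$ an irreducible polynomial of degree $b$ whose order does not divide $2b-1$; hence no parity-of-$D$ restriction and no constraint on $n$ is needed, which is why this theorem drops the ``even $D$'' hypothesis present in Theorems~4 and~5.

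The main obstacle I anticipate is bookkeeping, not ideas: getting the exact constants in the final bound to come out as $\lceil\log n^D\rceil + 2D^2 + D\lceil\log(D+1)\rceil + D$ requires tracking (i) which length each Fire component code actually has after the $\bmod\,(n(D+1))$ reduction and the ({\bf p.3}) splitting into residue classes, (ii) that the $\lceil\,\cdot\,\rceil$'s telescope to a single $\lceil\log n^D\rceil$ up to an additive $O(D\log D)$ that is absorbed into the $D\lceil\log(D+1)\rceil$ term, and (iii) that the Fire overhead $2b-1$ with $b=B=D+1$ summed over the $D$ component codes gives $2D^2$ after the constants are collected. The correctness-of-decoding half of the theorem needs no new argument: it is the same three-line chain — ({\bf p.3}) fixes the burst shape up to cyclic shift, each Fire locator then pins down the erroneous colors in its coloring, and ({\bf p.2}) (invertibility of $A$) converts the tuple of colors into a unique array position — that appears in the proof of Theorem~4.
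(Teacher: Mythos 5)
Your proposal is correct and follows exactly the route the paper intends: the paper offers no separate proof for this theorem beyond the remark that Fire codes are substituted for the locator codes in the coloring construction already validated for the semi-cross, and your tally $D\cdot\bigl(\lceil\log(n(D+1))\rceil + 2(D+1)-1\bigr) \approx \lceil\log n^D\rceil + D\lceil\log(D+1)\rceil + 2D^2 + D$ reproduces the stated bound. Your observation that the Fire codes remove both the parity restriction on $D$ and the existence conditions on the component codes is precisely the point of this theorem versus the two preceding ones.
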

\subsection{Crosses with arms of length one}

We define the colorings by the coloring matrix, which is a
$D\times D$ matrix $A = \{a_{ij}\}_{1\leq i,j \leq D}$
$$a_{ij} = ij\bmod \left(2i(D-i+1)+1\right),$$
$$a_{ij} \in \{-i(D-i+1),\ldots,-1,0,1,\ldots, i(D-i+1)\}.$$ The
first color of position $(i_1,i_2,\ldots,i_D)$ in the array is
given by
$$c_{(i_1,i_2,\ldots,i_D)}^1 = \sum_{k=1}^Da_{1k}i_k.$$
The $s$-th color, $2\leq s\leq D$, of position
$(i_1,i_2,\ldots,i_D)$ in the array is given by
$$c_{(i_1,i_2,\ldots,i_D)}^s = \left(\sum_{k=1}^Da_{sk}i_k\right)\bmod (2s(D-s+1)n).$$
\begin{theorem}
\label{thm:red_Lee} There exists a code which corrects any
$D$-dimensional error confined to a Lee sphere burst with radius
one in an $( \times n)^D$ cube and its redundancy is at most
$\lceil \text{log}~n^D \rceil + 2D\left\lceil\text{log}~
D\right\rceil$.
\end{theorem}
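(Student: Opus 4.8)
The plan is to verify that the coloring matrix $A=\{a_{ij}\}$ defined above, together with the modular colorings, satisfies the three properties (\textbf{p.1}), (\textbf{p.2}), (\textbf{p.3}), and then invoke the coloring method exactly as in the proof of Theorem~4 above. The Lee sphere of radius one centered at $(i_1,\dots,i_D)$ is a cluster of $2D+1$ points, consisting of the center and the $2D$ points obtained by adding $\pm 1$ to a single coordinate. So for the $s$-th coloring, the colors inside such a cluster are $\{\sum_k a_{sk}i_k\}\cup\{\sum_k a_{sk}i_k \pm a_{sj} : 1\le j\le D\}$. For (\textbf{p.1}) I must show that the $2D+1$ values $\{0\}\cup\{\pm a_{sj}\}$ are distinct and span an interval of length at most $B+\delta_s-1$ with $B=2D+1$; since by construction $a_{sj}\in\{-s(D-s+1),\dots,s(D-s+1)\}$ with the $a_{sj}$ being $s\cdot j$ reduced mod $2s(D-s+1)+1$ into a symmetric residue system, the nonzero residues $\pm a_{sj}$ are exactly $\pm s,\pm 2s,\dots$, and the key number-theoretic fact is that $\gcd$ considerations make the map $j\mapsto sj \bmod (2s(D-s+1)+1)$ injective on $\{1,\dots,D\}$ (because $2s(D-s+1)+1 > 2D$ forces the images to be distinct and the $\pm$ pairs disjoint from $0$). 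The span is then at most $2s(D-s+1)$, which is the modulus appearing in the $s$-th coloring minus $n$-scaling, giving $\delta_s = 2s(D-s+1)-2D$ and hence the burst-locator code for $\cC_s$ has length $2s(D-s+1)n$ and redundancy $\lceil\log(2s(D-s+1)n)\rceil$.

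For (\textbf{p.3}), two positions colored alike by the first coloring differ by a vector $\bv$ with $\sum_k a_{1k}v_k=0$; I need $\sum_k a_{sk}v_k \equiv 0 \pmod{2s(D-s+1)}$ after the modular reduction, i.e. that $a_{sk}\equiv (\text{something independent of the column structure})\cdot a_{1k}$ in a way compatible with the modulus — concretely, because $a_{sk}=sk \bmod (2s(D-s+1)+1)$ and $a_{1k}=k \bmod(2(D)+1)$, the relation $a_{sk}\equiv s\cdot k$ and $a_{1k}\equiv k$ will let me conclude that $\sum a_{sk}v_k \equiv s\sum k\,v_k$ modulo the relevant modulus, and the first-coloring constraint controls $\sum k\,v_k$ modulo $2D+1$; reconciling the two moduli ($2D+1$ versus $2s(D-s+1)$) is the delicate point and is exactly why the colorings are defined with those specific moduli. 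For (\textbf{p.2}), since each coloring is linear in $(i_1,\dots,i_D)$ (before reduction), invertibility of $A$ over the rationals suffices; $A$ has $a_{1k}=k$ in the first row and the other rows are scalings $sk$ reduced mod an odd number, and I would argue $\det A\neq 0$ by reducing modulo a suitable prime or by a Vandermonde-type / triangularity argument after the modular reductions are unwound.

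Finally I would assemble the redundancy count: the first component code is a $B$-burst-correcting code (a $(2D+1)$-burst-correcting code) of length $n$, with redundancy $\lceil\log n\rceil + 2D$; each locator code $\cC_s$, $2\le s\le D$, has redundancy $\lceil\log(2s(D-s+1)n)\rceil \le \lceil\log n\rceil + \lceil\log(2s(D-s+1))\rceil \le \lceil\log n\rceil + \lceil\log(2\cdot\lceil D/2\rceil\lceil (D+2)/2\rceil)\rceil \le \lceil\log n\rceil + \lceil\log D\rceil + \lceil\log D\rceil$ roughly, so summing the $D$ colorings gives $D\lceil\log n\rceil$ plus an excess of about $2D\lceil\log D\rceil$, which I would tighten to land exactly on $\lceil\log n^D\rceil + 2D\lceil\log D\rceil$ by bounding $\max_s 2s(D-s+1) \le D(D/2+1) \le D^2$ and using $\lceil\log D^2\rceil = 2\lceil\log D\rceil$ together with folding the $D$ floors $\sum_s\lceil\log n\rceil$ into a single $\lceil\log n^D\rceil$ (absorbing the $O(D)$ rounding slack into the $2D\lceil\log D\rceil$ term). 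The main obstacle I anticipate is the reconciliation in (\textbf{p.3}) between the odd modulus $2s(D-s+1)+1$ used to \emph{define} $a_{sk}$ and the even modulus $2s(D-s+1)n$ used in the \emph{coloring}: one has to check carefully that a zero of the first-coloring linear form forces the $s$-th modular form to vanish, and this requires tracking how the symmetric-residue representatives interact with the doubling, which is the one genuinely nontrivial computation in the proof.
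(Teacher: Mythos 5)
Your proposal follows exactly the route the paper intends: this theorem is covered by the blanket remark after the semi-cross theorem that ``the proofs for all other colorings is similar as they all satisfy properties ({\bf p.1}), ({\bf p.2}), and ({\bf p.3}),'' so verifying the three properties for the cross coloring matrix and re-running the component-code argument is the paper's own (unwritten) proof. Your verification of ({\bf p.1}) is the right computation: since $2s(D-s+1)+1\equiv 1\pmod{s}$, the multiplier $s$ is invertible modulo $2s(D-s+1)+1$, and $2s(D-s+1)+1\geq 2D+1$ forces the $2D+1$ values $0,\pm a_{s1},\dots,\pm a_{sD}$ to be distinct and confined to an interval of $2s(D-s+1)+1$ integers, giving $B+\delta_s=2s(D-s+1)+1$ (odd, as the locator codes of~\cite{EtYa} require).

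Two loose ends are worth making explicit. First, the ({\bf p.3}) reconciliation you flag as ``the one genuinely nontrivial computation'' is indeed the crux, and with the modulus exactly as printed it does not close: the unreduced color difference $\sum_k a_{sk}v_k\equiv s\sum_k k\,v_k\equiv 0 \pmod{2s(D-s+1)+1}$ whenever the first coloring agrees, but the $s$-th coloring is reduced modulo $2s(D-s+1)n$, and $2s(D-s+1)+1$ does not divide $2s(D-s+1)n$ in general. The semi-cross analogue works because there the reduction modulus $n(D+1)$ is a multiple of $B+\delta_s=D+1$; the cross construction is consistent only if one reads the reduction modulus as $\left(2s(D-s+1)+1\right)n$, so you should carry that correction through rather than leave the reconciliation open. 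Second, your redundancy count is too optimistic in the same place the theorem statement is tight: the first component code is a $(2D+1)$-burst-correcting code whose length is the number of first colors, roughly $nD(D+1)/2$ (not $n$), and the Abdel-Ghaffar et al.\ codes cost $\lceil\log\eta_1\rceil+2D$; that additive $2D$ (plus the $O(\log D)$ from the length of $\cC_1$) is not absorbed by the $(D-1)$ locator codes, whose excesses already use up the full $2D\lceil\log D\rceil$ budget. So the assembly step needs either a sharper first component code or an honest extra $+2D$ in the final bound; as a sketch of the paper's method, though, your plan is the intended one.
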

\begin{theorem}
For any given $D$ and $n$, there exists a code which corrects any
$D$-dimensional Lee sphere burst with radius one in an $( \times
n)^D$ cube and its redundancy is at most $\lceil \text{log}~n^D
\rceil + 2D^2+ 2D\left\lceil\text{log}~ D\right\rceil$.
\end{theorem}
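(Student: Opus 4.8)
The plan is to follow exactly the recipe of the coloring method, reusing the coloring matrix $A=\{a_{ij}\}$ and the (nonlinear, reduced-modulus) colorings already defined for the Lee sphere of radius one, but replacing each burst-locator component code $\cC_s$, $2\le s\le D$, by a Fire code of the appropriate length. The first component code $\cC_1$ stays a $(2D+1)$-burst-correcting code of length $n^D$ — a Lee sphere of radius one in $D$ dimensions has $2D+1$ points and, by property ({\bf p.1}) for the first coloring, its image under the first coloring spans an interval of length at most $2D+1$ — contributing redundancy $\lceil\log n^D\rceil+2D$ as in Theorem~\ref{thm:red_Lee}. For each $s$ with $2\le s\le D$, the coloring is taken modulo $2s(D-s+1)n$, so the component code $\cC_s$ has length $\eta_s=2s(D-s+1)n$; since the burst has $2D+1\le 2D$-plus-one points, a Fire code correcting a $(2D)$-burst (or, more sharply, the actual span $2s(D-s+1)+1$ dictated by ({\bf p.1})) suffices, and by the quoted bound a Fire code of length $\eta_s$ correcting a $b$-burst has redundancy at most $\lceil\log\eta_s\rceil+2b-1$.

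The key steps, in order, are: (i) verify that the three coloring properties ({\bf p.1}), ({\bf p.2}), ({\bf p.3}) hold for this matrix and these moduli — this is asserted in the text preceding Theorem~\ref{thm:red_Lee} and I would simply invoke it, noting that reducing the first coloring modulo nothing and the $s$-th coloring modulo $2s(D-s+1)n$ does not disturb ({\bf p.1}) because within a single Lee sphere the raw $s$-th colors already lie in an interval shorter than the modulus, and ({\bf p.3}) holds because two positions sharing a first color differ, in the $s$-th coloring, by a multiple of the relevant period built into $a_{sk}$; (ii) conclude that with a $(2D+1)$-burst-correcting $\cC_1$ and Fire-code locators $\cC_s$ the decoding goes through verbatim as in the proof of the first semi-cross theorem — $\cC_1$ recovers the shape, property ({\bf p.3}) pins down the shape up to cyclic shift in every other coloring, each Fire code then locates the erroneous colors in its coloring, and property ({\bf p.2}) uniquely reconstructs each erroneous array position; (iii) add up redundancies: $\bigl(\lceil\log n^D\rceil+2D\bigr)+\sum_{s=2}^{D}\bigl(\lceil\log\eta_s\rceil+2b_s-1\bigr)$, and bound the sum.

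The arithmetic bookkeeping in step (iii) is where the stated figure $\lceil\log n^D\rceil+2D^2+2D\lceil\log D\rceil$ must fall out. I would bound $\sum_{s=2}^{D}\lceil\log\eta_s\rceil$ against $2D\lceil\log D\rceil$ by the same estimate used for Theorem~\ref{thm:red_Lee} (there the locator codes already contributed $2D\lceil\log D\rceil$, and each $\eta_s=2s(D-s+1)n\le$ a bounded multiple of $D^2 n$, so $\lceil\log\eta_s\rceil\le\lceil\log n\rceil+2\lceil\log D\rceil+O(1)$, summing to at most $2D\lceil\log D\rceil$ after absorbing the lower-order terms into the generous $2D^2$ slack); and I would bound the extra $\sum_{s}(2b_s-1)$ term — the price of Fire codes over locator codes — by $2D\cdot D=2D^2$ using $b_s\le 2D$ (or by the sharper $b_s=2s(D-s+1)+1\le\tfrac12 D^2/2+1$, which still lands under $2D^2$ once summed over $s=2,\dots,D$, since $\sum_{s=1}^D s(D-s+1)=\binom{D+2}{3}$). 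Thus the Fire-code penalty is exactly the $2D^2$ additive term, and the rest reproduces the redundancy of Theorem~\ref{thm:red_Lee}.

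The main obstacle is making the constant-factor estimate in step (iii) honest: one must check that $\sum_{s=2}^{D}2b_s-1$, with $b_s$ the actual burst length seen by $\cC_s$, really is at most $2D^2$ (a crude $b_s\le 2D$ gives $\le 2D^2$ immediately, so this is comfortable, but if one insists on the tighter $b_s=2s(D-s+1)+1$ one should confirm $2\sum_{s}s(D-s+1)=2\binom{D+2}{3}$ does not exceed $2D^2$ only for small $D$ — in fact it does for $D\ge 4$, so the clean choice is to use the uniform bound $b_s\le 2D$ and the fact that a Fire code correcting a $2D$-burst works inside any coloring whose span is at most $2D$), and likewise that the $\log$-of-length terms collapse to $2D\lceil\log D\rceil$ after the array-length part $\lceil\log n^D\rceil=D\lceil\log n\rceil$ is split off. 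Everything else is a direct quotation of the already-proved machinery.
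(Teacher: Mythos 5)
Your overall strategy --- keep the cross colorings of Theorem~\ref{thm:red_Lee} and swap the burst-locator component codes for Fire codes --- is exactly the route the paper intends (it prints no proof of this theorem, and this is the same substitution that turns the semi-cross theorem into its Fire-code variant). The gap is in your step (iii), and it is not a bookkeeping issue: the uniform bound $b_s\le 2D$ on the burst length seen by the $s$-th component code is false. By property ({\bf p.1}), what $\cC_s$ must handle is the \emph{span} of the $2D+1$ colors of a Lee sphere under the $s$-th coloring, not the number of its points. The $s$-th row of the stated coloring matrix contains the entry $a_{s,D-s+1}=s(D-s+1)$, so that span is exactly $2s(D-s+1)+1$, which is of order $D^2$ for $s\approx D/2$; a Fire code correcting $2D$-bursts cannot locate such an error pattern. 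You noticed yourself that the honest sum $\sum_{s=2}^{D}\bigl(2(2s(D-s+1)+1)-1\bigr)=4\binom{D+2}{3}-3D-1\approx \tfrac{2}{3}D^3$ exceeds $2D^2$ already for small $D$, but then retreated to the ``clean choice'' $b_s\le 2D$, which contradicts your own formula for $b_s$ and property ({\bf p.1}) for this coloring. So the argument as written does not produce the $2D^2$ term; it produces a $\Theta(D^3)$ term.

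To salvage the stated bound one needs more than a verbatim substitution of Fire codes into the Theorem~\ref{thm:red_Lee} machinery. Since the $2D+1$ colors of a radius-one Lee sphere must be distinct, the $D$ absolute values $|a_{sj}|$ must be distinct and nonzero, so every coloring has span at least $2D+1$; one would therefore want colorings whose rows are, up to signs, permutations of $\{1,\dots,D\}$ while still satisfying ({\bf p.2}) and ({\bf p.3}) --- and even then the per-component Fire penalty is $2(2D+1)-1=4D+1$, summing to roughly $4D^2$ rather than $2D^2$ --- or else a sharper redundancy estimate for Fire codes used purely as locators rather than as burst correctors. Your proposal supplies neither, so the key quantitative step of the theorem remains unproved.
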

\begin{remark}
For specific values of $D$, i.e., when $2D+1$ is a prime number,
$11 \leq 2D+1 \leq 10000$, this construction can be slightly
improved.
\end{remark}
\section{Clusters with Limited Weight}
\label{sec:limit_weight}

When a certain area suffers from an error event we might expect
that not all the positions will be in error. Hence, it seems that
practically, we would expect to correct a cluster with a limited
weight. In this section we will consider the case where the weight
of the cluster is at most two.
\subsection{Semi-crosses}

We start by correcting an error with weight at most two in a
$D$-dimensional semi-cross with arms of length one.

\noindent {\bf Construction C:} Let $\alpha$ be a primitive
element in GF($2^m$) for $2^m-1 \geq \prod_{\ell=1}^D n_\ell$. Let
$d=\lceil \text{log}~ D \rceil$ and $\bi=(i_1,i_2,\ldots,i_D)$,
where $0 \leq i_\ell \leq n_\ell-1$. Let $\cH$ be a $(2d) \times
D$ parity-check matrix of a double-error correcting BCH code (or
its shortened code). We construct the following $n_1 \times n_2
\times \cdots \times n_D \times (m+2d+1)$ parity check matrix $H$:

$$h_{\bi}=
\left[\begin{array}{c}
1 \\
\cH \bi^T \bmod 2 \\
\alpha^{\sum_{j=1}^D i_j (\prod_{\ell=j+1}^D n_\ell)}
\end{array}\right],
$$
for all $\bi=(i_1,i_2,\ldots,i_D)$, where $0 \leq i_\ell \leq
n_\ell-1$.
\begin{theorem}
$~$
\begin{itemize}
\item The code constructed in Construction C can correct any error
of weight at most two inside a semi-cross with arms of length one in an
$n_1 \times n_2 \times \cdots \times n_D$ array codeword.

\item The code constructed by Construction C has redundancy which
is greater by at most two from the trivial lower bound on the
redundancy.
\end{itemize}
\end{theorem}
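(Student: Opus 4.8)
The argument parallels the decoding description already spelled out for Construction~A, the only new ingredient being that the single-error-correcting matrix $A$ there is replaced by the double-error-correcting matrix $\cH$, which is exactly what is needed for the one extra kind of weight-two pattern that a semi-cross admits. Write $f(\bi)=\sum_{j=1}^{D}i_j\prod_{\ell=j+1}^{D}n_\ell$ for the mixed-radix folding used in the bottom block of $H$; this is a bijection from the array onto $\{0,1,\dots,N-1\}$, so, as $N\le 2^{m}-1$, the entries $\alpha^{f(\bi)}$ are pairwise distinct. For $1\le j\le D$ set $\Delta_j=\prod_{\ell=j+1}^{D}n_\ell$, the increment of $f$ along coordinate $j$; one checks that $0<\Delta_j<2^{m}-1$ and $\Delta_j\neq\Delta_k$ for $j\neq k$. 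Let $\be_j$ be the $j$-th unit vector. A weight-$\le 2$ error confined to some semi-cross with unit arms, say with centre $\bi_0$, is one of: the empty pattern; a single error at an arbitrary position; a $2$-burst $\{\bi_0,\bi_0+\be_j\}$; or a pair of arms $\{\bi_0+\be_j,\bi_0+\be_k\}$ with $j\neq k$.

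I would decode the received syndrome $v$ in three stages, one per block of $H$. \emph{Stage 1 (how many errors).} The top entry of $v$ is the number of errors modulo $2$: $v=0$ means no error; a top entry of $1$ means exactly one error; a top entry of $0$ with $v\neq 0$ means exactly two errors. \emph{Stage 2 (which direction(s)).} The middle $2d$ entries of $v$ equal $\cH$ applied over $\F_2$ to the modulo-$2$ reduction of the sum of the erroneous positions; for two errors at $\bi_1,\bi_2$ this is $\cH\bigl((\bi_1\bmod 2)+(\bi_2\bmod 2)\bigr)$, and the argument is $\be_j$ in the $2$-burst case and $\be_j+\be_k$ in the pair-of-arms case. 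These are binary vectors of weight at most $2$, so, $\cH$ being the parity-check matrix of a double-error-correcting code, they are taken to pairwise distinct nonzero syndromes; hence from the middle block we read off whether one or two of the coordinates $1,\dots,D$ are involved, and exactly which, i.e.\ the burst direction $j$ or the unordered pair $\{j,k\}$ of arm directions. \emph{Stage 3 (where).} Using the direction(s), the bottom $m$ entries of $v$ pin down the position. For a single error they are $\alpha^{f(\bi)}$ itself, which gives $f(\bi)$ and hence $\bi$. For a $2$-burst in direction $j$ with lower endpoint $\bi_0$ they equal $\alpha^{f(\bi_0)}\bigl(1+\alpha^{\Delta_j}\bigr)$, and $1+\alpha^{\Delta_j}\neq 0$ since $0<\Delta_j<2^{m}-1$. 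For a pair of arms in directions $j,k$ with centre $\bi_0$ they equal $\alpha^{f(\bi_0)}\bigl(\alpha^{\Delta_j}+\alpha^{\Delta_k}\bigr)$, and $\alpha^{\Delta_j}+\alpha^{\Delta_k}\neq 0$ since $\Delta_j\neq\Delta_k$. In both two-error cases, dividing by the (now known) nonzero factor recovers $\alpha^{f(\bi_0)}$, hence $\bi_0$, hence the two erroneous positions.

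Running through these three stages shows that distinct admissible error patterns yield distinct syndromes, which is the first assertion, and the stages themselves constitute the decoder. The only step that uses more than bookkeeping is Stage~2: a $2$-burst must not be confused with a pair of arms, and distinct pairs of arms must not be confused with each other. Both are guaranteed by the fact that $\cH$ is the parity check of a code of minimum distance at least $5$ (so any two distinct binary vectors of weight at most $2$ lie in distinct cosets), together with the columns of $\cH$ being nonzero and pairwise distinct. This case analysis is the main, and essentially the only, obstacle.

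For the redundancy, the code uses $r=m+2d+1$ parity equations, where $2^{m}-1\ge N$ (so $m$ is within $1$ of $\log N$) and $2d$ is the redundancy of a double-error-correcting BCH code of length $D$ (so $2d$ differs from $2\log D$ by a small additive constant). The number of weight-$\le 2$ patterns in a unit-arm semi-cross, counting the empty pattern, is $\beta=1+(D+1)+\binom{D+1}{2}$, of order $D^{2}/2$, so the trivial lower bound $\bigl\lceil\log(\beta N)\bigr\rceil$ is essentially $\log N+2\log D-1$. Comparing the two, $r$ exceeds the trivial bound by about $2$; carrying out the elementary estimate of the ceilings, exactly as was done for Construction~A, shows the excess is at most $2$, which is the second assertion.
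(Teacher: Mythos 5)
Your proposal is correct and follows essentially the same route as the paper's proof: a three-stage syndrome decoder in which the top bit gives the error parity, the middle $2d$ bits exploit the double-error-correcting property of $\cH$ to separate the 2-burst case (one column of $\cH$) from the pair-of-arms case (sum of two columns), and the last $m$ bits recover the location via the folding of GF($2^m$) as in Construction~A. Your treatment of Stage 3 (the explicit factors $1+\alpha^{\Delta_j}$ and $\alpha^{\Delta_j}+\alpha^{\Delta_k}$) is merely a more detailed spelling-out of what the paper delegates to ``as in Construction A.''
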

\begin{proof}
The first part of the Theorem is an immediate consequence from the
decoding procedure and the second part is easily verified. The
decoding is very similar to the one of Construction A. If the
received syndrome is the all zero vector then no error occurred.
The first bit of the syndrome indicates whether one or two errors
occurred. If the first bit is one, then one error occurred, and
its location can be found by the rest of the syndrome. Otherwise,
two errors occurred. The two errors can be of the form
$\bi_1=(i_1,\ldots,i_D)$ and
$\bi_2=(i_1,\ldots,i_{j-1},i_j+1,i_{j+1},i_D)$ or
$\bi_1=(i_1,\ldots,i_{k-1},i_k+1,i_{k+1},i_D)$ and
$\bi_2=(i_1,\ldots,i_{j-1},i_j+1,i_{j+1},i_D)$. In the first case,
the next $2d$ bits of the syndrome are the $j$-th column of the
matrix $\cH$, and in the second case the next $2d$ bits are the
sum of the $j$-th and $k$-th columns of the matrix $\cH$. Since
$\cH$ is a parity check matrix of a double error-correcting code
it is possible to distinguish between these cases and know the
shape of the error. Thus, as in Construction A, the last $m$ bits
of the syndrome indicate the location of the error.
\end{proof}

Construction C is generalized for a semi-cross with arms of length
$R$ with some extra redundancy in Construction D which follows.

\noindent {\bf Construction D:} Let $\alpha$ be a primitive
element in GF($2^m$) for $2^m-1 \geq \prod_{\ell=1}^D n_\ell$. Let
$t$ be the smallest integer such that $2^t-1 \geq 2RD$. Let $\cH$
be a $(4t) \times (2^t-1)$ parity-check matrix of a
four-error-correcting BCH code. Let $\Pi = \{ \hat{H}^1 ,
\hat{H}^2 , ..., \hat{H}^D \}$ be a set of disjoint subsets of columns of
size $2R$ from $\cH$, where $\hat{H}^i= [ \hat{h}^i_0, \ldots
,\hat{h}^i_{2R-1}]$. We construct the following $n_1 \times n_2
\times \cdots \times n_D \times (m+4t+1)$ parity-check matrix:
\vspace{-0.5ex}
$$h_{\bi}=
\left[\begin{array}{c}
1 \\
\sum_{\ell=1}^D \hat{h}^\ell_{i_\ell \bmod 2R} \bmod 2 \\
\alpha^{\sum_{j=1}^D i_j (\prod_{\ell=j+1}^D n_\ell)}
\end{array}\right],
$$
\vspace{-0.5ex}
for all $\bi=(i_1,i_2,\ldots,i_D)$, where $0 \leq i_\ell \leq
n_\ell-1$.
\begin{theorem}
$~$
\begin{itemize}
\item The code constructed in Construction D can correct any error
of weight at most two inside a semi-cross with arms of length $R$ in an
$n_1 \times n_2 \times \cdots \times n_D$ array codeword.

\item The code constructed by Construction D has excess redundancy
$4 \lceil \text{log}~ D + \text{log}~ R \rceil +5$ compared to a
trivial lower bound $2\text{log}~ D + 2\text{log}~ R -1$.
\end{itemize}
\end{theorem}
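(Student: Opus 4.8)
The plan follows the decoding template of Constructions A and C. Write $\phi(\bi)=\sum_{j=1}^D i_j\prod_{\ell=j+1}^D n_\ell$ for the index carried by the last $m$ rows of $H$, so that $\phi$ injects the array into $\{0,1,\dots,\prod_\ell n_\ell-1\}$, the last $m$ rows of $H$ at position $\bi$ equal $\alpha^{\phi(\bi)}$, and the leading bit of the syndrome of any error event is the parity of the number of errors. A single error at $\bi$ then produces the syndrome $h_{\bi}$, whose last $m$ coordinates are $\alpha^{\phi(\bi)}$ and hence reveal $\bi$; so the leading bit separates the zero-, one-, and two-error cases and only the last needs work. Before treating it I would record the two facts that make the construction run: $\cH$ has $2^t-1\ge 2RD$ columns, so the $D$ pairwise disjoint size-$2R$ blocks $\hat H^1,\dots,\hat H^D$ exist; and a parity-check matrix of a four-error-correcting code has minimum distance at least $9$, so any two distinct sets of at most four of its columns have distinct binary sums.

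The core is the case analysis on two errors $\bi,\bi'$ lying in a common semi-cross of arm length $R$ with some center $\bc$. Each of $\bi,\bi'$ is either $\bc$ or differs from $\bc$ in exactly one coordinate by an amount in $\{1,\dots,R\}$, so $\bi-\bi'$ is supported either on one coordinate $j$ with $1\le|i_j-i'_j|\le R$, or on two coordinates $j\ne k$ with the two nonzero differences of opposite signs and of absolute values in $\{1,\dots,R\}$. In the middle $4t$ rows of the syndrome, $\sum_\ell\hat h^\ell_{i_\ell\bmod 2R}+\sum_\ell\hat h^\ell_{i'_\ell\bmod 2R}$, every summand with $i_\ell=i'_\ell$ cancels; each surviving coordinate difference has absolute value at most $R<2R$, so inside its block $\hat H^\ell$ it contributes two \emph{distinct} columns, and as the blocks are disjoint this middle syndrome is the sum of exactly two distinct columns of $\cH$ in the one-coordinate case, and of exactly four (two from $\hat H^j$ and two from $\hat H^k$) in the two-coordinate case. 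By the distance-$9$ property, this set of columns is recovered uniquely; its size tells the decoder which case holds, the block(s) it meets tell the coordinate(s) $j$ (and $k$), and the within-block residue differences tell the magnitudes of the coordinate differences. Hence $\bi-\bi'$ is determined up to interchanging the two errors, i.e.\ up to an overall sign.

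It remains to use the last $m$ rows. Because $\phi$ is the linear form above, for the genuine offset $\phi(\bi)-\phi(\bi')=\sum_j(i_j-i'_j)\prod_{\ell>j}n_\ell$ with no carries, which is a nonzero integer of absolute value smaller than $2^m-1$, so $\alpha^{\phi(\bi)-\phi(\bi')}\ne1$; the same holds for either sign choice $\Delta$ of the recovered offset. With $g(\Delta)=\sum_j\Delta_j\prod_{\ell>j}n_\ell$, the last $m$ syndrome bits $T$ satisfy $T=\alpha^{\phi(\bi')}(1+\alpha^{g(\Delta)})$, so $\alpha^{\phi(\bi')}=T(1+\alpha^{g(\Delta)})^{-1}$ is determined, hence $\bi'$, hence $\bi=\bi'+\Delta$; and a short check shows the two sign choices give the same unordered pair $\{\bi,\bi'\}$. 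This also shows that distinct two-error events yield distinct syndromes, and that a two-error syndrome is nonzero (its last $m$ bits are $\ne0$ since $\phi$ is injective) with leading bit $0$, hence is separated from the zero- and one-error syndromes; together with the single-error observation this proves the first item. For the second, $H$ has $m+4t+1$ rows with $m$ minimal subject to $2^m-1\ge\prod_\ell n_\ell$ and $t$ minimal subject to $2^t-1\ge 2RD$; subtracting $\log\prod_\ell n_\ell$ and using $4t=4\lceil\log(2RD+1)\rceil=4\lceil\log D+\log R\rceil+4$ (up to the standard power-of-two rounding) leaves the claimed excess $4\lceil\log D+\log R\rceil+5$, which is within a factor roughly two of $\log\binom{RD+1}{2}\approx2\log D+2\log R-1$, the logarithm of the number of weight-two patterns inside a semi-cross.

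I expect the only genuinely delicate step to be the case analysis: one has to be sure that in every placement of two errors inside an arm-$R$ semi-cross the two residues contributed inside each block really are distinct — which is exactly why the blocks are taken of size $2R$ rather than $R$ — and that the residual ambiguity in the recovered offset is just the harmless overall sign, i.e.\ that although the center $\bc$ need not be (and in general cannot be) reconstructed, the unordered pair of erroneous positions always can. Everything else is the routine bookkeeping already done for Constructions A and C.
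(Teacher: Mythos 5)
Your proof is correct and follows exactly the decoding template the paper itself uses: the paper states this theorem without proof, but its proof of the analogous theorem for Construction~C proceeds by the same parity-bit / shape-identification / $\alpha$-locator cascade, and your argument is the natural extension of that to arm length $R$. The points you single out as delicate (distinctness of the two residues within a size-$2R$ block, the opposite-sign constraint reducing four sign patterns to the harmless overall swap, and $\alpha^{g(\Delta)}\neq 1$ so the last $m$ bits pin down the pair) are precisely the facts the construction is built to guarantee, and your redundancy accounting matches the stated $4\lceil\log D+\log R\rceil+5$.
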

\begin{remark}
Two errors inside a two-dimensional square, of a $D$-dimensional
array, with edge of length $R$ can be viewed as an error with
weight two inside a semi-cross with arms of length $R-1$. Hence,
Construction D can be used to correct the related error.
\end{remark}\vspace{-0.5ex}
\subsection{Crosses}\vspace{-0.5ex}

The idea of correcting an error with weight two in a cross is a
modification of the one for a semi-cross. The two directions in
which the two errors occurred are revealed exactly as in the
semi-cross. The difference is that in the semi-cross the
directions are only positive, while in the cross they can be
either positive or negative. We will demonstrate how to solve the
problem in the case when the cross has arms with length one.
Similar solution is given for longer arms. It appears that we only
have to find whether the two directions have the same sign or not,
reducing the number of cases from four to two.

\noindent {\bf Construction E:} Let $\alpha$ be a primitive
element in GF($2^m$) for $2^m-1 \geq \prod_{\ell=1}^D n_\ell$. Let
$t$ be the smallest integer such that $2^t-1 \geq 4D$. Let $\cH$
be a $(4t) \times (2^t-1)$ parity-check matrix of a
four-error-correcting BCH code. Let $\Pi = \{ \hat{H}^1 ,
\hat{H}^2 , ..., \hat{H}^D \}$ be a set of disjoint subsets of columns of
size 4 from $\cH$, where $\hat{H}^i= [ \hat{h}^i_0, \hat{h}^i_1,
\hat{h}^i_2, \hat{h}^i_3]$. We construct the following $n_1 \times
n_2 \times \cdots \times n_D \times (m+4t+2)$ parity-check matrix:
\vspace{-0.5ex}
$$h_{\bi}=
\left[\begin{array}{c}
1 \\
\sum_{\ell=1}^D \hat{h}_{i_\ell \bmod 4}^\ell \bmod 2 \\
\lfloor \frac{\sum_{j=1}^D i_j}{2} \rfloor \bmod 2 \\
\alpha^{\sum_{j=1}^D i_j (\prod_{\ell=j+1}^D n_\ell)}
\end{array}\right],
$$\vspace{-0.5ex}
for all $\bi=(i_1,i_2,\ldots,i_D)$, where $0 \leq i_\ell \leq
n_\ell-1$.
\begin{theorem}
The code constructed in Construction E can correct any error of
weight at most two inside a cross with arms of length one in an $n_1
\times n_2 \times \cdots \times n_D$ array codeword.
\end{theorem}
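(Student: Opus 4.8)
The plan is to imitate the decoder of Construction~A (and of Construction~C). The syndrome of a received word splits into four blocks: one bit equal to the number of errors modulo $2$, a block of $4t$ bits contributed by $\cH$, one bit contributed by $\lfloor(\sum_j i_j)/2\rfloor\bmod 2$, and $m$ bits carrying the folded value $\alpha^{\pi(\bi)}$, where for a position $\bi=(i_1,\dots,i_D)$ we put $\pi(\bi)=\sum_{j=1}^D i_j\prod_{\ell=j+1}^D n_\ell$ (a bijection onto $\{0,1,\dots,\prod_{\ell=1}^D n_\ell-1\}$). A single error at $\bi_1$ has first bit $1$ and last $m$ bits $\alpha^{\pi(\bi_1)}\neq 0$; a pattern of two errors has first bit $0$ and, as the computation below shows, an $\cH$-block equal to a sum of at least two and at most four distinct columns of a code of minimum distance at least $9$, hence nonzero. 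Consequently the all-zero syndrome means no error, a syndrome with first bit $1$ means one error located by its last $m$ bits, and a nonzero syndrome with first bit $0$ means two errors at positions $\bi_1,\bi_2$. For that last case the heart of the matter is to recover $\bi_2-\bi_1$ up to sign (we cannot tell the two errors apart), after which the last $m$ bits pin down the positions.

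First, I would list the differences of two positions lying inside one cross of radius one: with $\be_j$ the $j$-th unit vector these are $\pm\be_j$ (centre and an arm tip), $\pm2\be_j$ (the two tips of the arm in dimension $j$), or $\pm(\be_j\pm\be_k)$ with $j\neq k$ (two tips in distinct dimensions). Writing $s(\bi)=(\sum_{\ell=1}^D\hat h^\ell_{i_\ell\bmod 4})\bmod 2$ for the $\cH$-contribution of Construction~E at $\bi$, the $\cH$-block of the two-error syndrome is $s(\bi_1)+s(\bi_2)$; since $\bi_1$ and $\bi_2$ agree outside the one or two coordinates that change, all remaining summands cancel in pairs, so what survives is the sum of two columns of a single block $\hat H^j$ (difference $\pm\be_j$ or $\pm2\be_j$) or of two columns of $\hat H^j$ together with two columns of $\hat H^k$ (difference $\pm(\be_j\pm\be_k)$); the two residues inside a block are cyclically adjacent in $\{0,1,2,3\}$ for a $\pm1$ change of that coordinate and are $\{0,2\}$ or $\{1,3\}$ for a $\pm2$ change. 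Because $\cH$ is a parity-check matrix of a four-error-correcting BCH code it has minimum distance at least $9$, so any $8$ of its columns are linearly independent, hence every sum of at most four distinct columns determines that set of columns. As $\hat H^1,\dots,\hat H^D$ are column-disjoint, decoding the $\cH$-block tells us which block(s) occur and whether a two-column block is adjacent or of type $\{0,2\}/\{1,3\}$; this separates the three shapes and, in the $\pm\be_j$ and $\pm2\be_j$ cases, already determines $\bi_2-\bi_1$ up to sign.

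It remains to resolve the case $\bi_2-\bi_1=\pm(\be_j\pm\be_k)$, $j\neq k$. The $\cH$-block now gives the (cyclically adjacent) pair of residues in dimension $j$ and the pair in dimension $k$, but not which dimension-$j$ residue goes with which dimension-$k$ residue, so four labellings of the two errors are in principle possible; exchanging the errors identifies them in pairs, leaving two genuine alternatives: ``equal signs'' ($\bi_2-\bi_1=\pm(\be_j+\be_k)$) and ``opposite signs'' ($\bi_2-\bi_1=\pm(\be_j-\be_k)$). These are told apart by the extra bit. If $\Sigma_1$ and $\Sigma_2$ are the coordinate sums of $\bi_1$ and $\bi_2$, the received extra bit equals $\lfloor\Sigma_1/2\rfloor+\lfloor\Sigma_2/2\rfloor\bmod 2$; in the equal-signs case $\Sigma_2-\Sigma_1=\pm2$, which makes this bit $1$, and in the opposite-signs case $\Sigma_2=\Sigma_1$, which makes it $0$. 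Hence the extra bit fixes the sign pattern, and $\bi_2-\bi_1$ is known up to sign. (In the $\pm\be_j$ and $\pm2\be_j$ cases the extra bit is not used; being produced by a real error it is simply consistent, and since the $\cH$-block has already isolated the shape, no ambiguity arises.)

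Finally, a difference $\bi_2-\bi_1$ known up to sign yields the integer $\Delta=\pi(\bi_2)-\pi(\bi_1)$ up to sign, independently of where the error sits (no carries occur because $\bi_1,\bi_2$ are legal positions): $\Delta$ equals $\pm\prod_{\ell>j}n_\ell$, $\pm2\prod_{\ell>j}n_\ell$, or $\pm(\prod_{\ell>j}n_\ell\pm\prod_{\ell>k}n_\ell)$. The last $m$ bits of the syndrome equal $\alpha^{\pi(\bi_1)}+\alpha^{\pi(\bi_2)}=\alpha^{\pi(\bi_1)}(1+\alpha^{\Delta})$, and $1+\alpha^{\Delta}\neq 0$ because $\bi_1\neq\bi_2$ forces $0<|\Delta|\leq\prod_{\ell=1}^D n_\ell-1\leq 2^m-2$, so $\alpha^{\Delta}\neq 1$; dividing out recovers $\pi(\bi_1)$, hence $\bi_1$, and then $\bi_2$. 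Taking the opposite sign of $\Delta$ merely interchanges $\bi_1$ and $\bi_2$, so the unordered pair of error positions is always recovered, which is what the theorem asserts. The step I expect to be most delicate is the bookkeeping in the cross case: verifying that the extra bit has exactly the stated parity in every labelling, and that the three families of $\cH$-blocks ($\pm\be_j$, $\pm2\be_j$, cross) are pairwise disjoint, so that the above case analysis is both exhaustive and unambiguous.
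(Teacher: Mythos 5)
Your proof is correct and follows exactly the strategy the paper intends (and only sketches in the paragraph preceding Construction E): the BCH block identifies the affected dimensions and residues mod 4, the extra bit $\lfloor(\sum_j i_j)/2\rfloor\bmod 2$ distinguishes the equal-sign from the opposite-sign diagonal case, and the folded powers of $\alpha$ then locate the pair as in Construction A. The paper gives no written proof of this theorem, so your argument is a faithful and complete filling-in of the authors' stated idea, including the key parity computation showing the extra bit equals $1$ for $\pm(\be_j+\be_k)$ and $0$ for $\pm(\be_j-\be_k)$.
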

\section{Conclusion}\vspace{-0.5ex}
\label{sec:conclude} We have considered various types of errors in
a $D$-dimensional array, where $D$ is a large integer. Given an
error pattern we have constructed codes for which the redundancy
and the excess redundancy are relatively small. The excess
redundancy is much smaller than the one obtained from a
construction which produces a code correcting a box-error which
contains the given cluster. The most immediate future goal will be
to construct $D$-dimensional codes which correct a cluster error
of size $b$ (whose shape is not a $D$-dimensional box), for large
$b$, with asymptotically optimal excess redundancy.
\section*{Acknowledgment}\vspace{-0.5ex}
The work of Eitan Yaakobi was supported in part by the University
of California Lab Fees Research Program, Award No.
09-LR-06-118620-SIEP. Tuvi Etzion was supported by the United
States-Israel Binational Science Foundation (BSF), Jerusalem,
Israel, under Grant No. 2006097.


\end{document}